\newtheorem{theorem}{\bf Theorem}[section]
\newtheorem{definition} [theorem] {\bf Definition}
\newenvironment{proof}{\noindent\mbox{\textbf{Proof} 
}}{\rm\hspace*{\fill}$\rule{7pt}{7pt}$\vspace{10pt}}
\title{On the~set-representable orthomodular posets that are point-distinguishing}
\author{Dominika Bure\v{s}ov\'{a} and Pavel Pt\'{a}k}
\date{}
\begin{document}

\maketitle

\begin{abstract}
 Let us denote by $\mathcal{SOMP}$ the~class of all set-representable orthomodular posets and by $\mathcal{PD SOMP}$ those elements of $\mathcal{SOMP}$ in which any pair of points in the~underlying set $P$ can be distinguished by a~set (i.e., $(P, \mathcal{L}) \in \mathcal{PD SOMP}$ precisely when for any pair $x, y \in P$ there is a set $A \in \mathcal{L}$ with $x \in A$ and $y \notin A$).
  In this note we first construct, for each $(P, \mathcal{L}) \in \mathcal{SOMP}$, a~point-distinguishing orthomodular poset that is isomorphic to $(P, \mathcal{L})$. We show that by using a~generalized form of the~Stone representation technique we also obtain point-distinguishing representations  of $(P, \mathcal{L})$. We then prove that this technique gives us point-distinguishing representations on which all two-valued states are determined by points (all two-valued states are Dirac states).
   Since orthomodular posets may be regarded as abstract counterparts of event structures about quantum experiments, results of this work may have some relevance for the~foundation of quantum mechanics.
\end{abstract}

\noindent AMS Classification: \textit{06C15, 03612, 81B10}.

\noindent Keywords and phrases:
    \textit{Boolean algebra, orthomodular poset, quantum logic, two-valued state}.

\section{Introduction}
In the~logical-algebraic foundation of quantum mechanics, orthomodular posets are viewed as event structures about quantum experiments. They are called quantum logics. Among those orthomodular posets, a special conceptual position is taken by the~set-representable ones (see~\cite{BuresovaPtak, DeSimoneNavaraPtak1, DeSimoneNavaraPtak2, DvurecenskijPulmannova, GoGre, Gudder, Harding, Ptak} etc.). In this note, \textbf{we shall exclusively deal with set-representable orthomodular posets}. Let us introduce them.

\begin{definition}
    Let $P$ be a set and let $\mathcal{L}$ be a collection of subsets of $P$ ($\mathcal{L}\subseteq \mathrm{exp}~P$). Suppose that $\mathcal{L}$ is subject to the~following requirements:
    \begin{enumerate}[(i)]
        \item $P \in \mathcal{L}$,
        \item if $A\in\mathcal{L}$, then $A'\in\mathcal{L}$ ($A' = P\setminus A$),
        \item if $A,B\in\mathcal{L}$ and $A\cap B=\emptyset$, then $A\cup B\in\mathcal{L}$.
    \end{enumerate}
    Then $(P, \mathcal{L})$ is said to be a set-representable orthomodular poset ($(P, \mathcal{L}) \in \mathcal{SOMP}$). If $\mathcal{L}$ is a lattice with respect to the~inclusion ordering, the~couple ($P, \mathcal{L}$) is said to be a set-representable orthomodular lattice.
\end{definition}
 	%
    %
    %
    %
    %
    %

\section{Point-distinguishing orthomodular posets}
The~basic notion of this paragraph is introduced by the~next definition.
\begin{definition}
    Let $(P, \mathcal{L})$ belong to $\mathcal{SOMP}$.~Let us call $(P, \mathcal{L})$\\ \textbf{point-distinguishing} provided for each couple of distinct elements $x,y \in P$ there is a set $ A \in \mathcal{L}$ such that $x \in A$ and $y \in P\setminus A$.
\end{definition}
    Obviously, many orthomodular posets of $\mathcal{SOMP}$ are not point-distinguishing (for instance, such is ($P, \mathcal{L}) = (\emptyset, A, A', B, B', P)$ ``for big sets'' $A$ and $B$) and so are many Boolean algebras. The~objective of this part of the~paper, seemingly not explicitly dealt with in the~investigation of $\mathcal{SOMP}$, is to show that each $(P, \mathcal{L})$ has a~point-distinguishing representation. We will first provide an~internal construction, then we will apply a procedure - a generalized form of the~Stone representation theorem - borrowed from the~theory of Boolean algebras.
    
Let us first introduce a natural relation on the~orthomodular posets of $\mathcal{SOMP}$. 
\begin{definition}
Let $(P, \mathcal{L})$ belong to $\mathcal{SOMP}$. Let $\mathcal{R}$ be the~relation on $P$ defined as follows:
If $x,y \in P$, then $x\mathcal{R}y$ if there is no set $A\in \mathcal{L}$ such that $x \in A$ and $y \in A'$.
\end{definition}
\begin{theorem}
Let $(P, \mathcal{L})$ belong to $\mathcal{SOMP}$. Then the~relation $\mathcal{R}$ on $P$ defined in \textit{Definition 2.2} is an~equivalence relation. Moreover, if $\{R_{\alpha}~|~\alpha\in I\}$ is the~decomposition of $P$ by the~classes of the~equivalence
$\mathcal{R}$, then the~following implication holds true:

If $A \in \mathcal{L}$, then, for any $\alpha \in I$, either $R_{\alpha}\subseteq A$, or $R_{\alpha}\subseteq A'$.
\end{theorem}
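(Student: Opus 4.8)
The plan is to trade the negatively phrased definition of $\mathcal{R}$ for an equivalent positive one. Unwinding the quantifiers in Definition 2.2 shows that $x\mathcal{R}y$ holds exactly when every $A\in\mathcal{L}$ that contains $x$ also contains $y$ (no member of $\mathcal{L}$ puts $x$ inside and $y$ outside). I would establish this reformulation first and then use it for all three equivalence properties, since it converts each into a one-line implication argument. Reflexivity is immediate: because $A$ and $A'=P\setminus A$ are disjoint, no $A\in\mathcal{L}$ can contain $x$ while placing $x$ in $A'$, so $x\mathcal{R}x$ always holds. Transitivity is equally direct and, notably, uses no closure property of $\mathcal{L}$: if $x\mathcal{R}y$ and $y\mathcal{R}z$, then any $A\in\mathcal{L}$ containing $x$ must contain $y$ and hence $z$, which is precisely $x\mathcal{R}z$.

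The one genuinely structural step, and the conceptual crux despite being short, is symmetry, where the orthocomplementation axiom (ii) must be invoked — the raw definition of $\mathcal{R}$ is visibly asymmetric, and it is only the closure of $\mathcal{L}$ under complements that restores symmetry. Assuming $x\mathcal{R}y$, I would prove $y\mathcal{R}x$ by contraposition: a set $A\in\mathcal{L}$ witnessing a failure of $y\mathcal{R}x$ (so $y\in A$, $x\notin A$) passes to $A'\in\mathcal{L}$, which now has $x\in A'$ and $y\notin A'$, contradicting $x\mathcal{R}y$. This is the only place where I expect to need care in bookkeeping the roles of $A$ and $A'$.

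For the ``moreover'' clause I would argue by contradiction, reusing the positive reformulation. Suppose some $A\in\mathcal{L}$ satisfies neither $R_{\alpha}\subseteq A$ nor $R_{\alpha}\subseteq A'$. Then $R_{\alpha}$ contains a point $x$ with $x\in A'$ and a point $y$ with $y\in A$. Since $x$ and $y$ lie in the same $\mathcal{R}$-class we have $x\mathcal{R}y$, so every member of $\mathcal{L}$ containing $x$ contains $y$; applying this to $A'\in\mathcal{L}$ gives $y\in A'$, contradicting $y\in A$. Hence each class is wholly inside $A$ or wholly inside $A'$. No step beyond the equivalence-relation properties and the closure axioms is needed, so I anticipate no further obstacle once the positive reformulation is in hand.
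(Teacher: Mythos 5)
Your proof is correct and takes essentially the same route as the paper's: your transitivity argument is just the contrapositive of the paper's case split (if $A$ separates $x$ from $z$, then $A$ separates either $x$ from $y$ or $y$ from $z$), and your contradiction argument for the ``moreover'' clause is exactly the paper's observation that a class of $\mathcal{R}$ cannot meet both $A$ and $A'$. The only difference is one of detail rather than of method: the paper dismisses reflexivity and symmetry as obvious, whereas you correctly isolate symmetry as the one spot where closure of $\mathcal{L}$ under complementation (axiom (ii)) is genuinely used --- a point the paper leaves implicit.
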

\begin{proof}
The~relation $\mathcal{R}$ is obviously reflexive and symmetric. As regards the~transitivity, let us suppose that $x\mathcal{R}y$ and $y\mathcal{R}z$ ($x,y,z\in P$). Then $x\mathcal{R}z$. Indeed, if there is a set $A\in\mathcal{L}$ such that $x\in A$ and $z\in A'$, then either $y\in A'$ in which case $x$(non~$\mathcal{R}$)$y$, or $y\in A$ in which case $y$(non~$\mathcal{R}$)$z$. Further, if $A\in\mathcal{L}$, then $R_{\alpha}$ cannot intersect both $A$ and $A'$ since if it does, $R_{\alpha}$ would not be a~class of $\mathcal{R}$.
\end{proof}
\\
The~previous theorem allows us to formulate the~following result. Prior to that, let us recall a~notion and make some conventions. Suppose that $(P, \mathcal{L})$ and $(Q, \mathcal{K})$ belong to $\mathcal{SOMP}$. Let us call a mapping $f~:~\mathcal{L}\to\mathcal{K}$ a~$\mathcal{SOMP}$-morphism if
\begin{enumerate}[(i)]
        \item $f(P) = Q$,
        \item $f(A') = f(A)'~(A \in \mathcal{L})$, and
        \item $f(A \cup B) = f(A) \cup f(B)$ provided $A \subseteq B'$ $(A, B \in \mathcal{L})$.
    \end{enumerate}
A both injective and surjective $\mathcal{SOMP}$-morphism is said to be a~\textbf{ $\mathcal{SOMP}$-isomorphism} if $f^{-1}$ is a~$\mathcal{SOMP}$-morphism, too.
\begin{definition}
Let $(P, \mathcal{L})$ belong to $\mathcal{SOMP}$. Let $\mathcal{R}$ be the~equivalence introduced in \textit{Definition 2.2} and let $\{R_{\alpha}~|~\alpha\in I\}$ be the~decomposition of $P$ given by the~classes of $\mathcal{R}$. Consider the~following couple $(\tilde{P},~\tilde{\mathcal{L}})$, where $\tilde{P} = \{R_{\alpha}~|~\alpha\in I\}$ and $\tilde{\mathcal{L}}$ is the~following collection of subsets of $\tilde{P}: \tilde{A}\in\tilde{\mathcal{L}}$ if there is a set $A\in\mathcal{L}$ such that $\tilde{A}=\{R_{\alpha} \mid R_{\alpha}\subseteq A\}$. Let us call $(\tilde{P},~\tilde{\mathcal{L}})$ the~\textbf{natural point-distinguishing representation} of $(P, \mathcal{L})$.
\end{definition}
Obviously, $(\tilde{P},~\tilde{\mathcal{L}})$ is point-distinguishing: If $R_{\alpha} \neq R_{\beta}$ and $x \in R_{\alpha}$ and $y \in R_{\beta}$ then there is $A\in \mathcal{L}$ such that $x \in A$ and $y \in A'$. By \textit{Theorem 2.3} $R_{\alpha}\subseteq A$ and $R_{\beta}\subseteq A'$. In order to distinguish $R_{\alpha}$ from $R_{\beta}$ it is sufficient to take $\tilde{A} = \{A_{\gamma}~|~A_{\gamma}\subseteq A\}$.

    %
    %
    %
    %
    %
    %
\begin{theorem}
Let $(P, \mathcal{L})$ belong to $\mathcal{SOMP}$ and let $(\tilde{P}, \tilde{\mathcal{L}})$ be the~natural point-distinguishing representation of $(P, \mathcal{L})$.
\begin{enumerate}[(i)]
        \item If $f:\mathcal{L}\to \tilde{\mathcal{L}}$ assigns to any $A\in\mathcal{L}$ the~set $f(A) = \{R_{\alpha}~|~R_{\alpha}\subseteq A \} \in \tilde{\mathcal{L}}$, then $f$ is a $\mathcal{SOMP}$-isomorphism.
        \item If $\mathcal{L}$ belongs to $\mathcal{SOMP}$ and $\mathcal{L}$ is a lattice, then $f:\mathcal{L}\to \tilde{\mathcal{L}}$ defined above is a lattice $\mathcal{SOMP}$-isomorphism.
        \item If $(P, \mathcal{L})$ is closed under the~formation of symmetric difference, then so is $(\tilde{P}, \tilde{\mathcal{L}})$ and both $f$ and $f^{-1}$ preserve the~respective symmetric differences.
        \item If $(P, \mathcal{L})$ is a Boolean algebra then so is $(\tilde{P}, \tilde{\mathcal{L}})$ and $f$ is a Boolean isomorphism.
    \end{enumerate}
\end{theorem}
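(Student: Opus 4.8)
The plan is to isolate from Theorem 2.3 a single structural fact and then derive all four items from it, essentially by bookkeeping. That fact is twofold: each $A\in\mathcal{L}$ is a union of $\mathcal{R}$-classes, so that $A=\bigcup\{R_{\alpha}\mid R_{\alpha}\subseteq A\}=\bigcup f(A)$; and, dually, each class $R_{\alpha}$ lies \emph{entirely} inside $A$ or entirely inside $A'$. I would record both at the outset as a preliminary observation, since this homogeneity of classes is exactly what allows the set operations on $P$ to be computed class by class, and it is the only nontrivial input the proof needs.

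First I would settle (i). Well-definedness and surjectivity of $f$ are immediate from the definition of $\tilde{\mathcal{L}}$. For injectivity I would use $A=\bigcup f(A)$: if $f(A)=f(B)$ then $A=\bigcup f(A)=\bigcup f(B)=B$. In fact I would prove the stronger statement that $f$ is an order-isomorphism, i.e.\ $A\subseteq B$ iff $f(A)\subseteq f(B)$; the forward direction is trivial and the backward one is again $A=\bigcup f(A)\subseteq\bigcup f(B)=B$. The three morphism axioms are then checked directly: $f(P)=\tilde{P}$ is clear; $f(A')=f(A)'$ because by homogeneity $R_{\alpha}\subseteq A'$ iff $R_{\alpha}\not\subseteq A$, so $f(A')=\tilde{P}\setminus f(A)$; and for disjoint $A,B$ a class contained in $A\cup B$ must, being homogeneous with respect to $A$, sit either in $A$ or in $A'\cap(A\cup B)=B$, which gives $f(A\cup B)=f(A)\cup f(B)$ as a disjoint union. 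That $f^{-1}$ is a morphism then follows from the order-isomorphism property, once I check that disjointness is reflected: if $f(A)$ and $f(B)$ are disjoint in $\tilde{P}$, then $A\cap B$ can contain no point (the class of such a point would lie in both $f(A)$ and $f(B)$), so $A\cap B=\emptyset$ and $f^{-1}$ respects the disjoint union.

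For (ii) I would simply invoke that an order-isomorphism transports and preserves every supremum and infimum that exists; hence if $\mathcal{L}$ is a lattice so is $\tilde{\mathcal{L}}$, and $f$ is automatically a lattice isomorphism, with no computation involving the (possibly non-set-theoretic) lattice operations. For (iii) the one substantive point is that $f$ commutes with symmetric difference: by homogeneity a class lies in $A\triangle B$ exactly when it lies in precisely one of $A,B$, which is precisely the condition $R_{\alpha}\in f(A)\triangle f(B)$; hence $f(A\triangle B)=f(A)\triangle f(B)$. Closure of $\tilde{\mathcal{L}}$ under $\triangle$ and preservation by both $f$ and $f^{-1}$ then drop out. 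For (iv) I would combine the previous parts: by (ii) $f$ is a lattice isomorphism and by (i) it preserves complements, so $\tilde{\mathcal{L}}$ is a complemented distributive lattice and $f$ a Boolean isomorphism; if one prefers to view a set-representable Boolean algebra as a field of sets, the same conclusion comes from the direct identities $f(A\cap B)=f(A)\cap f(B)$ (immediate) and $f(A\cup B)=f(A)\cup f(B)$ (the disjoint-case argument above, now without the disjointness hypothesis).

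I expect no serious obstacle: once homogeneity of classes is in hand, each identity reduces to a class-by-class verification. The only places demanding care are reflecting disjointness through $f^{-1}$ in (i), and, in (ii) and (iv), resisting the temptation to compute with set-theoretic unions when the lattice join need not be a union — a pitfall the order-isomorphism viewpoint cleanly avoids.
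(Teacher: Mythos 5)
Your proof is correct and follows exactly the route the paper intends: the paper's entire proof of this theorem is the single sentence ``It easily follows from \textit{Theorem 2.3},'' and your argument is precisely a careful working-out of that, with the homogeneity of the classes $R_{\alpha}$ (each lying wholly inside $A$ or wholly inside $A'$) as the sole input. Your additional observations---using the order-isomorphism to handle the lattice case in (ii) rather than set-theoretic joins, and reflecting disjointness through $f^{-1}$---fill in exactly the details the paper leaves to the reader.
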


\begin{proof}
It easily follows from \textit{Theorem 2.3}.
\end{proof}\\
It could be noted that \textit{Theorem 2.5(ii)} may shed light on the~algebraic theory of the~subclass of $\mathcal{SOMP}$ consisting of lattices- this subclass forms a variety (see~\cite{GoGre, Mayet, MayetPtak}; obviously, the~point-distinguishing lattices of $\mathcal{SOMP}$ are easier to deal with). The~statement of \textit{Theorem 2.5(iii)} could be instrumental in the~study of the~orthomodular posets of $\mathcal{SOMP}$ that have a symmetric difference (see e.g.~\cite{BuresovaPtak, DeSimoneNavaraPtak2, Matousek}). In particular, it is worth noticing that the~examples of conceptually important orthomodular posets with the~property that $x$ is compatible with $y$ exactly when $x \lor y$ exists (see~\cite{NavaraPtak}) could be constructed point-distinguishing.

For a potential further research, let us conclude this paragraph with a~few observations concerning the~natural point-distinguishing representation.
\begin{enumerate}[(i)]
        \item The~natural point-distinguishing representation is functorial. Indeed, if $(P, \mathcal{L})$ and $(Q, \mathcal{K})$ belong to $\mathcal{SOMP}$ and $\tilde{f}: (P, \mathcal{L}) \to (\tilde{P}, \tilde{\mathcal{L}})$ and $\tilde{\tilde{f}}: (Q, \mathcal{K}) \to (\tilde{Q}, \tilde{\mathcal{K}})$ are the~natural point-distinguishing representations, then for each $\mathcal{SOMP}$-morphism $g: (P, \mathcal{L}) \to (Q, \mathcal{K})$ there is a unique $\mathcal{SOMP}$-morphism $\tilde{g}: (\tilde{P}, \tilde{\mathcal{L}}) \to (\tilde{Q}, \tilde{\mathcal{K}})$ such that we have a commuting diagram $\tilde{\tilde{f}}*g = \tilde{g}*\tilde{f}$.
        \item The~natural point-distinguishing representation of $(P, \mathcal{L})$ can be considered ``to live  on a subset of $P$''. Indeed, we can choose a~point, $x_{\alpha}$, in any $R_{\alpha}$ and ``copy''  $\tilde{\mathcal{L}}$ on the~set  $Q = \{x_{\alpha}\mid \alpha \in I\}$, i.e. we include a subset $\{x_{\alpha}\mid \alpha \in J\}$ of $Q$ exactly when $\{R_{\alpha} \mid \alpha \in J \} \in \mathcal{\tilde{L}}$.
        \item If $(Q, \mathcal{K})$ is $\mathcal{SOMP}$-isomorphic to $(P, \mathcal{L})$ and $(Q, \mathcal{K})$ is point-distinguishing, then $\mathrm{card}~\tilde{\mathcal{L}} \leq \mathrm{card}~Q$.
        \item Let $S$ be a set and let $\{S_\alpha~|~\alpha \in I\}$be a partition of $S$. Let us denote by $(S, \mathcal{B}_{\min})$ (resp. $(S, \mathcal{B}_{\max})$) the~Boolean algebra of the~finite and co-finite unions of the~sets $\{S_\alpha~|~\alpha \in I\}$ (resp. the~Boolean algebra of all subsets of $\{S_\alpha~|~\alpha \in I\}$). Then $\mathcal{B}_{\min}$ (resp. $\mathcal{B}_{\max}$) is a minimal (resp.maximal) Boolean algebra of subsets of $S$ for which $\{S_{\alpha}, \alpha \in I\}$ forms the~classes of the~natural point-distinguishing equivalence.
    \end{enumerate}
    %
    %
    %
    %
    %
    %
\section{Constructing point-distinguishing orthomodular posets by the~Stone technique}
Let $(P, \mathcal{L})$ belong to $\mathcal{SOMP}$. Let us denote by $\mathcal{S}_2(P, \mathcal{L})$ the~set of all two-valued states over $(P, \mathcal{L})$. Recall that $s: \mathcal{L} \to \{0,1\}$ is said to be a~\textbf{two-valued state} over $(P, \mathcal{L})$ if
\begin{enumerate}[(i)]
\item $s(P) = 1$, and
\item $s(A \cup B) = s(A) + s(B)$ provided $A, B \in \mathcal{L}$ and $A \subseteq B'$.
\end{enumerate}
A two-valued state $s$ over $(P, \mathcal{L})$ is said to be a~\textbf{Dirac state} if there is a~point $p \in P$ such that $s(A) = 1$ exactly when $p \in A$.\\

The~following result generalizes the~Boolean Stone representation theorem. We style it for our purpose.
\begin{theorem}
Let $(P, \mathcal{L})$ belong to $\mathcal{SOMP}$. Let $\mathcal{S} \subseteq \mathcal{S}_2(P, \mathcal{L})$ and let $\mathcal{S}$ have the~following property:

If $A, B \in \mathcal{L}$ and $A \nsubseteq B$, then there is $s \in \mathcal{S}$ with $s(A) = 1$ and $s(B) = 0$. Let $\mathcal{U}$ be the~collection of the~subsets $U$ of $\mathcal{S}$ determined as follows:

$U \in \mathcal{U}$ exactly when there is a set $Q \in \mathcal{L}$ such that $U = \{s \in \mathcal{S}~|~s(Q) = 1 \}$. Then $(\mathcal{S}, \mathcal{U}) \in \mathcal{SOMP}$ and $(\mathcal{S}, \mathcal{U})$ is point-distinguishing. Moreover, $(\mathcal{S}, \mathcal{U})$ is $\mathcal{SOMP}$-isomorphic to $(P, \mathcal{L})$. If $\mathcal{S}$ consists of all Dirac states, then 
$(\mathcal{S}, \mathcal{U})$ is $\mathcal{SOMP}$-isomorphic to the~natural point-distinguishing representation $(\tilde{P}, \tilde{\mathcal{L}})$ of $(P, \mathcal{L})$. If $\mathcal{S} = \mathcal{S}_2(P, \mathcal{L})$, then each two-valued state on $(\mathcal{S}, \mathcal{U})$ is a Dirac state.
\end{theorem}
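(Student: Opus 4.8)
The plan is to work throughout with the abbreviation $\widehat{Q} = \{s \in \mathcal{S} \mid s(Q) = 1\}$ for $Q \in \mathcal{L}$, so that $\mathcal{U} = \{\widehat{Q} \mid Q \in \mathcal{L}\}$, and to establish the four assertions in the order stated. The first task is to verify the three $\mathcal{SOMP}$-axioms for $(\mathcal{S}, \mathcal{U})$. Axiom (i) is immediate since $\widehat{P} = \mathcal{S}$, and axiom (ii) follows from the state identity $s(Q) + s(Q') = s(P) = 1$, which yields $(\widehat{Q})' = \widehat{Q'}$. The only step needing the separating hypothesis on $\mathcal{S}$ is axiom (iii): from $\widehat{Q_1} \cap \widehat{Q_2} = \emptyset$ I would deduce $Q_1 \subseteq Q_2'$, for otherwise the hypothesis produces an $s$ with $s(Q_1) = 1$ and $s(Q_2') = 0$, that is $s \in \widehat{Q_1} \cap \widehat{Q_2}$, a contradiction. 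Once $Q_1 \subseteq Q_2'$ holds, the set $Q_1 \cup Q_2$ lies in $\mathcal{L}$ and additivity of two-valued states gives $\widehat{Q_1 \cup Q_2} = \widehat{Q_1} \cup \widehat{Q_2}$. Point-distinguishing-ness is then routine: distinct states differ on some $Q \in \mathcal{L}$, and $\widehat{Q}$ or $\widehat{Q'}$ separates them.

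For the isomorphism with $(P, \mathcal{L})$, I would take $\Phi \colon \mathcal{L} \to \mathcal{U}$, $\Phi(Q) = \widehat{Q}$, which is surjective by the definition of $\mathcal{U}$. The decisive use of the separating hypothesis is that $\Phi$ reflects inclusion: $\widehat{Q_1} \subseteq \widehat{Q_2}$ forces $Q_1 \subseteq Q_2$ (else one separates), while monotonicity of states gives the converse, so injectivity follows. Together with $\Phi(P) = \mathcal{S}$, the identity $\Phi(Q') = \Phi(Q)'$ from (ii), and the orthogonal-join identity from (iii), this shows that $\Phi$ is an order isomorphism preserving complements, hence an $\mathcal{SOMP}$-isomorphism, its inverse inheriting the same three properties.

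For the case in which $\mathcal{S}$ consists of all Dirac states, I would first observe that two points $x, y \in P$ induce the same Dirac state exactly when $x \mathcal{R} y$, so the Dirac states are canonically indexed by the classes $R_\alpha$, that is, by $\tilde{P}$. Under the resulting point-bijection $\beta \colon \tilde{P} \to \mathcal{S}$, the set $\tilde{A} = \{R_\alpha \mid R_\alpha \subseteq A\}$ is carried precisely onto $\widehat{A}$, since $R_\alpha \subseteq A$ is equivalent to the common Dirac state of $R_\alpha$ taking value $1$ on $A$. Thus $\beta$ transports $\tilde{\mathcal{L}}$ onto $\mathcal{U}$; composing with the isomorphism $f$ of \textit{Theorem 2.5} yields the desired $\mathcal{SOMP}$-isomorphism $(\tilde{P}, \tilde{\mathcal{L}}) \cong (\mathcal{S}, \mathcal{U})$.

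The final assertion, with $\mathcal{S} = \mathcal{S}_2(P, \mathcal{L})$, is the part I expect to be the conceptual crux. I would note first that the separating hypothesis does hold for $\mathcal{S}_2$, since the Dirac states are already separating and form a subfamily of $\mathcal{S}_2$; hence the earlier parts apply and $\Phi$ is an isomorphism. Now let $t$ be any two-valued state on $(\mathcal{S}, \mathcal{U})$ and form its pullback $\sigma = t \circ \Phi$ on $(P, \mathcal{L})$; because $\Phi$ preserves order and complementation, $\sigma$ is again a two-valued state, so $\sigma \in \mathcal{S}_2(P, \mathcal{L}) = \mathcal{S}$ is itself a point of the representation. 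The key observation is that $t$ is exactly the Dirac state at $\sigma$: for every $U = \widehat{Q}$ one has $t(U) = t(\Phi(Q)) = \sigma(Q)$, which equals $1$ precisely when $\sigma \in \widehat{Q} = U$. The only genuine subtlety is this self-referential closure — that taking $\mathcal{S}$ to be the full state space makes the pullback of an abstract state land back inside $\mathcal{S}$ as the witnessing point — and once it is isolated, the verification is immediate.
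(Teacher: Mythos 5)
Your proposal is correct and follows essentially the same route as the paper: you use the same map $\Phi(Q)=\{s\in\mathcal{S}\mid s(Q)=1\}$ (the paper's $e$), the same identification of Dirac states with the classes of the equivalence $\mathcal{R}$ for the middle assertion, and the same pullback $\sigma = t\circ\Phi$ to exhibit an abstract two-valued state on $(\mathcal{S},\mathcal{U})$ as the Dirac state at the point $\sigma\in\mathcal{S}$. The only difference is one of detail, not of method: you spell out the Stone-type verifications (the $\mathcal{SOMP}$ axioms for $\mathcal{U}$, order reflection and injectivity of $\Phi$) that the paper compresses into the phrase ``verbatim the Stone representation theorem.''
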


\begin{proof}
If we define a mapping $e: \mathcal{L} \to \mathcal{U}$ by setting $e(C) = \{s \in \mathcal{S}~|~s(C) = 1 \}, C \in \mathcal{L} $, we can easily prove-verbatim the~Stone representation theorem- that $e$ is a $\mathcal{SOMP}$-isomorphism (see also~\cite{Tkadlec} for related considerations). Moreover, if $s_1 \neq s_2$ then there is a set $D$, $D \in \mathcal{L}$ such that $s_1(D) = 1$ and $s_2(D) = 0$. If we take $U_1 = \{s \in \mathcal{S}~|~s(D) = 1\}$, then $U_1 \in \mathcal{U}$ and $s_1 \in U_1$ whereas $s_2 \notin U_1$. Further, suppose that $\mathcal{S} = \{s \in \mathcal{S}_2(P, \mathcal{L})~|~s$ is a~Dirac state\}. If $s_1 \in \mathcal{S}$ and $s_2 \in \mathcal{S}$ with $s_1$ and $s_2$ given by $p_1 \in P$ and $p_2 \in P$, we see that $s_1 = s_2$ precisely when $p_1\mathcal{R}p_2$ in the~equivalence $\mathcal{R}$ defined in \textit{Definition 2.2}. Hence in this case the~orthomodular poset $(\mathcal{S}, \mathcal{U})$ is $\mathcal{SOMP}$-isomorphic to the~natural point-distinguishing representation $(\tilde{P},\tilde{\mathcal{L}})$ of $(P, \mathcal{L})$. Finally, suppose that $\mathcal{S} = \mathcal{S}_2(P, \mathcal{L})$. Let $t$ be a two-valued state on $(\mathcal{S}, \mathcal{U})$. Applying the~$\mathcal{SOMP}$-isomorphism $e: \mathcal{L} \to \mathcal{U}$, let us consider the~two-valued state, $s$, on $(P, \mathcal{L})$ such that $s = te$. So $s \in \mathcal{S}_2(P, \mathcal{L})$ and we conclude that $t$ is the~Dirac state given by $s$. Indeed, suppose that $t (U) = 1$ for some $U \in \mathcal{U}$. Write $U = \{t~|~t \in \mathcal{S}_2(P, \mathcal{L})$ such that $u(V) = 1$ for some $V \in \mathcal{L}\}$. Then $s(V) = te(V) = 1$ and we conclude that $t$ is the~Dirac state given by $s$. The~proof is complete.
\end{proof}\\

Let us shortly comment on the~results obtained. First, for the~property of having all two-valued states Dirac states one has to pay the~price of having to lift up the cardinality of the~underlying set. In principal, the~cardinality of $\mathcal{S}_2(P, \mathcal{L})$ might be ${2^{2}}^{\mathrm{card}P}$. In the~case of $P$ being finite, the~set $\mathcal{S}_2(P, \mathcal{L})$ is finite, too, and this can be used in the~analysis of states of finite orthomodular posets (see e.g.~\cite{BikchentaevNavara}).

Another question concerns the~algebraic structure of those elements of $\mathcal{SOMP}$ on which all two-valued states are Dirac states. This class is closed under finite products. On the~other hand, this class is not closed under substructures. Indeed, if we e.g. take for $(P, \mathcal{L})$ the~ortomodular poset $6_{even}$ of all subsets of $\{1,2,3,4,5,6\}$ with an~even cardinality, then each two-valued state on $6_{even}$ is a~Dirac state (see  \cite{DeSimoneNavaraPtak1}). The~orthomodular poset $4_{even}$ defined analogously on $\{1,2,3,4\}$ can be viewed as a substructure of $6_{even}$ but $4_{even}$ has a two-valued state that is not a~Dirac state (it suffices to set $s(1,2)=s(1,3)=s(1,4)=0$).

In the~last remark, let us observe that even for the~orthomodular posets of $\mathcal{SOMP}$ that are closed under symmetric difference there can be established a link of \textit{Theorem 2.5(iii)} with a generalized Stone representation. It can be done in the~analogy to the~link of \textit{Theorem 2.5(i)} to \textit{Theorem 3.1}. It suffices to consider the~two-valued $\triangle$-states instead of the~mere two-valued states (see also~\cite{DeSimoneNavaraPtak1}; the~state is said to be a $\triangle$-state if $s(A \triangle B) \leq s(A) + s(B), A, B \in \mathcal{L}$). We thus obtain the~class of the~orthomodular posets of $\mathcal{SOMP}$ that are closed under symmetric difference isomorphic to the~class of the~point-distinguishing elements of $\mathcal{SOMP}$ on which all two-valued $\triangle$-states are Dirac.

\section*{Acknowledgement}
The~first author was supported by the~Czech Science Foundation grant 20-09869L.
The~second author was supported by the~European Regional Development Fund, project ``Center for Advanced Applied Science''\\ (No.\ CZ.02.1.01/0.0/0.0/16\_019/0000778).

The final publication is available at Springer via https://doi.org/10.1007/s10773-023-05436-3.

\section*{Affiliations}
\vspace*{10mm}
{\sc Dominika Bure\v{s}ov\'{a}}

Department of Cybernetics

Czech Technical University, Faculty of Electrical Engineering

166 27  Prague 6

Czech Republic

{\em e-mail}: buresdo2@fel.cvut.cz
\\[5mm]
{\sc Pavel Pt\' ak}

Department of Mathematics

Czech Technical University, Faculty of Electrical Engineering

166 27  Prague 6

Czech Republic

{\em e-mail}: ptak@fel.cvut.cz

\section*{Statements and Declarations}
The~first author was supported by the~Czech Science Foundation grant 20-09869L.
The~second author was supported by the~European Regional Development Fund, project ``Center for Advanced Applied Science''\\ (No.\ CZ.02.1.01/0.0/0.0/16\_019/0000778).

The~authors have no relevant financial or non-financial interests to disclose.

\end{document}